\newtheorem{prop}{Proposition}
\newtheorem{example}{Example}
\newtheorem{rmk}{Remark}
\begin{document}
%
\title{Decoding $q$-ary lattices in the Lee metric}

\author{\IEEEauthorblockN{Antonio C. de A. Campello Jr.\IEEEauthorrefmark{1}}
\IEEEauthorblockA{Institute of Mathematics, Statistics\\and Computer Science\\
University of Campinas, S\~ao Paulo\\
13083-859, Brazil\\
Email: campello@ime.unicamp.br} \and \IEEEauthorblockN{Grasiele C.
Jorge\IEEEauthorrefmark{2}}
\IEEEauthorblockA{Institute of Mathematics, Statistics\\and Computer Science\\
University of Campinas, S\~ao Paulo\\
13083-859, Brazil\\
Email: grajorge@ime.unicamp.br} \and \IEEEauthorblockN{Sueli I. R.
Costa\IEEEauthorrefmark{3}}\thanks{Work partially supported by
FAPESP\IEEEauthorrefmark{1} under grant 2009/18337-6,
CNPq\IEEEauthorrefmark{2} 140239/2009-0, CNPq\IEEEauthorrefmark{3}
309561/2009-4 and  FAPESP 2007/56052-8}
\IEEEauthorblockA{Institute of Mathematics, Statistics\\and Computer Science\\
University of Campinas, S\~ao Paulo\\
13083-859, Brazil\\
Email: sueli@ime.unicamp.br}}


%


\maketitle

\begin{abstract}
$q$-ary lattices can be obtained from $q$-ary codes using the
so-called Construction A. We investigate these lattices in the Lee
metric and show how their decoding process can be related to the
associated codes. For prime $q$ we derive a Lee sphere decoding
algorithm for $q$-ary lattices, present a brief discussion on its
complexity and some comparisons with the classic sphere decoding.
\end{abstract}


%

\section{Introduction}
A $q$-ary lattice \cite{Mic,Superball} is an integer lattice in the
Euclidean space $\mathbb{R}^{n}$ which contains $q\mathbb{Z}^{n}$ as
a sublattice. It can be obtained via Construction A \cite{Sloane}
from a linear code in the module $\mathbb{Z}_q^{n}$. Those lattices
have deserved special attention in recent years due to their use in
cryptographic schemes based on lattices, one of the so-called
``post-quantum'' methods \cite{Mic,Mic2}. One important problem
concerning general lattices (thus particularly $q$-ary lattices) is
the CVP (Closest Vector Problem) which asks for the closest lattice
point to a received point in $\mathbb{R}^n$. A method largely used
to solve this problem in the Euclidean metric is the sphere decoding
\cite{Pohst, Viterbo1, Viterbo2}, which has exponential expected
complexity \cite{Babak}. Other methods include basis reductions such
as LLL \cite{LLL} and BKZ \cite{BKZ}, and trellis algorithm
\cite{Banihashemi}.

Codes in the Lee metric, on the other hand, were introduced in
\cite{Lee} and since then have been the object of study of many
works from both theoretical (e.g. \cite{Peter, Golomb, Bader,
Vardy}) and practical (e.g. \cite{Roth, Etzion}) points of view. The
Lee metric has a close relation with the $l_1$ metric, also called
Manhattan or Taxi Cab metric, explored for example \cite{Golomb,
Bader}, concerning the existence of perfect codes and more recently
\cite{Vardy}, where the authors show how to construct dense
error-correcting codes in the Lee metric from dense lattice packings
of $n$-dimensional cross-polytopes.

The contributions in this paper are organized as follows. In Section
III we derive connections between a $q$-ary lattice and its
associated code decoding processes in the Lee metric through
Propositions \eqref{Prop1} and \eqref{Prop2}. This illustrates the
fact that the Lee metric seems to have a ``natural'' geometry when
dealing with $q$-ary codes (and lattices) for $q \in \mathbb{N},$ $q
> 3$. In Section IV we propose an
adaptation of the traditional sphere decoding ideas for $q$-ary
lattices ($q$ prime) in the Lee metric and discuss its expected
complexity through arguments which are similar to the ones presented
in \cite{Babak}. Using geometric arguments, we also make some
comparisons with the classic sphere decoding and perform some low
dimensional simulations.

\section{Preliminaries}
In this section we summarize some concepts and properties related to $q$-ary lattices, Lee metric and establish
the notation to be used from now on.
\subsection{$q$-ary lattices}
 Given $q \in \mathbb{N}$, a \textit{$q$-ary linear code $C$} is a $\mathbb{Z}_q$-submodule of
$\mathbb{Z}_q^n$. For prime $q$, there is always a generator matrix
for $C$ in the \textit{systematic form}, i.e., $A_{n \times k} \sim
[I_{k \times k} \,\,  | \,\, B_{k \times n-k}^t]^t.$

A \textit{lattice} $\Lambda$ is a discrete additive subgroup of
$\mathbb{R}^n$. Equivalently, $\Lambda \subseteq \mathbb{R}^n$ is a
lattice iff there are linearly independent vectors ${\bm
v_1},\ldots,{\bm v_m} \in \mathbb{R}^n\mbox{, }$ such that any $y
\in \Lambda$ can be written as $y = \sum_{i=1}^m \alpha_i {\bm v_i}
\,\, \mbox{,} \,\, \alpha_i \in \mathbb{Z}$. The set $\{{\bm
v_1},\ldots,{\bm v_m}\}$ is called a \textit{basis} for $\Lambda$. A
matrix $M$ whose columns are these vectors is said to be a
\textit{generator matrix} for $\Lambda$. Given a metric $d$ in
$\mathbb{R}^n$, the Voronoi region of ${\bm x} \in \Lambda$ is the
set $V(x) = \{{\bm y} \in \mathbb{R}^{n}; d({\bm y},{\bm x}) \leq
d({\bm y},{\bm x^{*}}),\,\, \mbox{for}\,\, \mbox{all}\,\, {\bm
x^{*}} \in \Lambda\}$. To decode ${\bm y} \in \mathbb{R}^n$ is to
find the closest lattice point to ${\bm y}$.

The so-called Construction A extended for $q$-ary codes
\cite{Superball}, can be described by the surjective map $\phi:
\mathbb{Z}^n \longrightarrow \mathbb{Z}_q^n$, $\phi(x_1,\ldots,x_n)
= (\overline{x_1},\ldots,\overline{x_n}).$ Given a linear code $C
\subseteq \mathbb{Z}_q^n$, $\Lambda_q (C) = \phi^{-1}(C)$ is said to
be the $q$-ary lattice associated to $C$ and
$\Lambda_q(C)/q\mathbb{Z}^n \approx C$. The code $C$ can be viewed
as the set of representatives of the above quotient inside the
hypercube $\left[0,q \right)^n$ and $\Lambda_q(C)$ is given by
translations of this set by multiples of $q$ in each direction. For
$q$ prime, $\Lambda_q(C)$ is generated by the matrix:

\begin{equation}
M=\left[\begin{array}{cc}
I_{k\times k} & 0_{k\times(n-k)}\\
B_{(n-k) \times k} & qI_{(n-k)\times(n-k)}\end{array}\right]
\label{eq:Geradora}
\end{equation}

\noindent provided that $[I_{k \times k}  \,\,  | \,\, B_{k \times
n-k}^t]^t$ is the associated generator matrix for $C$.

\subsection{Lee metric}

Instead of the usual Hamming metric for codes and Euclidean for
lattices we consider here the Lee metric for both spaces which seems
to be more natural when dealing with $q$-ary lattices and codes.

For ${\bm x}=(x_1,\ldots,x_n), {\bm y}=(y_1,\ldots,y_n) \in
\mathbb{R}^{n},$ the $l_1$ or sum distance is defined as
$d_{l_1}({\bm x},{\bm y}) = \displaystyle \sum_{i=1}^n|x_i-y_i|.$
The Lee distance in either
$\mathbb{Z}_{q}^{n}=\mathbb{Z}/q\mathbb{Z}^{n}$ or
$\mathbb{R}^{n}/q\mathbb{Z}^{n}$ is the distance induced by
$d_{l_1}$ through the quotient map: $d_{Lee}(\overline{\bm
x},\overline{\bm y}) = \displaystyle \sum_{i=1}^n \min
\left\{x_i-y_i \,\, (\mbox{mod }q), \,\, y_i-x_i \,\, (\mbox{mod }q) \right\}.$ We
will denote here either $d_{l_1}$ or $d_{Lee}$ by $d$ and call both
Lee distance. The \textit{minimum norm} $\mu$ of a lattice $\Lambda$
is $\mu = \displaystyle\min_{{\bm 0} \neq {\bm x} \in \Lambda}
d({\bm x},{\bm 0})$ and for a $q$-ary lattice $\Lambda_q (C)$ we
have $\mu = \min \{q, d(C)\}$ \cite{Superball}.

\section{Decoding $q$-ary lattices via Construction A}

A decoding process for lattices
constructed from binary codes via Construction A is presented in \cite{Sloane}. It is shown that
decoding a binary code $C \subseteq \mathbb{Z}_{q}^{n}$  corresponds to
decoding  in the binary lattice $\Lambda_{2}(C)\subseteq \mathbb{R}^{2}$ in the Euclidean
metric. In this section we obtain the same kind of relation between code
and lattice decoding with the Lee metric.

Let $C \subseteq \mathbb{Z}_{q}^{n}$ be a $q$-ary code. Due to the
isomorphism $\Lambda_{q}(C)/q\mathbb{Z}^{n} \simeq C$, we will not
distinguish  the elements of $\Lambda_{q}(C)/q\mathbb{Z}^{n}$ from
codewords of $C$ and we will denote by $\lceil  \rfloor$ the
rounding to the nearest integer. Given a received vector ${\bm r}
\in \mathbb{R}^n,$ let ${\bm z}$ be its closest point in
$\Lambda_{q}(C)$ considering the Lee metric. In the next
Propositions \eqref{Prop1} and \eqref{Prop2},  we show how to find
via Construction A a representative of ${\overline{\bm z}}$ which is
given by a codeword in $C.$

\begin{prop} \label{Prop1} Let $\Lambda_{q}(C)$ be a $q$-ary lattice and ${\bm r} = (r_1,\ldots,r_n)^{t}
\in \mathbb{R}^n$ a received vector. Given an element
${\overline{\bm x}} \in \Lambda_{q}(C)/q\mathbb{Z}^{n},$ ${\bm x} =
({x_1},\ldots,{x_n})^{t},$ the representative $\bm{z} =
(z_1,\ldots,z_n)^{t} \in \Lambda_q(C)$ of $\bm{\bar{x}}$ which is
closest to ${\bm r}$ in $\Lambda_{q}(C)$ considering the Lee metric
is given by $z_i = x_i + qw_i$ where $w_i = \left
\lceil\displaystyle\frac{r_i - x_i}{q}\right\rfloor,$ for each
$i=1,\ldots,n.$
\end{prop}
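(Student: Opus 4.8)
The plan is to exploit the product structure of the problem. The key observation is that the Lee distance on $\mathbb{R}^n$ (the $l_1$ distance) decomposes as a sum over coordinates, $d(\bm{z},\bm{r}) = \sum_{i=1}^n |z_i - r_i|$, and the constraint that $\bm{z}$ be a representative of a fixed coset $\bm{\bar x}$ means precisely that each coordinate has the form $z_i = x_i + q w_i$ with $w_i \in \mathbb{Z}$ freely chosen. Since both the objective and the feasible set split coordinatewise, minimizing $d(\bm{z},\bm{r})$ over all representatives reduces to $n$ independent one-dimensional minimization problems. So first I would reduce the claim to showing, for each fixed $i$, that $\min_{w_i \in \mathbb{Z}} |x_i + q w_i - r_i|$ is attained at $w_i = \lceil (r_i - x_i)/q \rfloor$.

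Next I would solve this scalar problem. Writing $f(w) = |x_i + q w - r_i| = q\,|\,w - (r_i - x_i)/q\,|$, the factor $q > 0$ is irrelevant to the minimizer, so I am minimizing $|w - t|$ over $w \in \mathbb{Z}$ where $t = (r_i - x_i)/q$. The integer minimizing distance to a real number $t$ is exactly the nearest integer $\lceil t \rfloor$, which is the defining property of the rounding operator introduced before the statement. This gives $w_i = \lceil (r_i - x_i)/q \rfloor$ and hence $z_i = x_i + q w_i$, matching the claimed formula.

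Finally I would assemble the coordinatewise minimizers into the vector $\bm{z}$ and argue that this simultaneously minimizes the full sum: because each term $|z_i - r_i|$ depends only on its own $w_i$, minimizing each term independently minimizes the total, so the $\bm{z}$ so constructed is the closest representative of $\bm{\bar x}$ to $\bm{r}$ in the Lee metric. I do not expect any serious obstacle here; the argument is essentially the separability of the $l_1$ norm together with the elementary fact that rounding to the nearest integer solves the scalar problem. The only point requiring a word of care is the possible tie when $t$ is a half-integer (so that two integers are equidistant from $t$); the nearest-integer convention $\lceil \,\rfloor$ resolves this by a fixed choice, and one should note that in that degenerate case the minimum is still attained, merely non-uniquely, so the stated $\bm{z}$ remains a valid closest representative.
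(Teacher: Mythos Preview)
Your proposal is correct and follows exactly the same idea as the paper's proof: representatives of the coset are $\bm{z}=\bm{x}+q\bm{w}$ with $\bm{w}\in\mathbb{Z}^n$, the Lee ($l_1$) distance $d(\bm{r},\bm{z})=\sum_{i=1}^n|r_i-x_i-qw_i|$ separates over coordinates, and each one-dimensional term is minimized by $w_i=\lceil (r_i-x_i)/q\rfloor$. The paper states this in a single sentence; your write-up is simply a more explicit version of the same argument, including the harmless remark on ties at half-integers.
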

\begin{proof} The proof is straightforward. A representative of the class of ${\bm x}$ is
given by ${\bm z} = {\bm x} + q {\bm w}$, where ${\bm w} \in
\mathbb{Z}^n$ and the Lee distance $d({\bm r},{\bm z}) =
\sum_{i=1}^{n}|r_i - x_i - q w_i|$ is minimum when $w_i =\left\lceil
\frac{r_i - x_i}{q}\right \rfloor.$
\end{proof}

\begin{prop} \label{Prop2} Let $\Lambda_{q}(C)$ be a $q$-ary lattice. Given ${\bm r}=(r_1,\ldots,r_n)^{t}
\in \mathbb{R}^n$ a
received vector let ${\bm r}\,\, (\mbox{mod }q) \in [0,q)^n$, obtained
from ${\bm r}$ by reductions modulo $q$  in each entry. If
${\overline{\bm x} } \in C$ is an element of $C$ closest to ${\bm
r}\,\, (\mbox{mod }q)$ considering the Lee metric, ${\bm z} \in
\Lambda_{q}(C),$ ${\overline{\bm z}} = {\overline{\bm x}},$ given by
Proposition \eqref{Prop1} is a lattice point nearest to ${\bm r}$.
\end{prop}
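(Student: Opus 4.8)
The plan is to reduce the global minimization over all lattice points to an outer minimization over cosets (codewords) followed by an inner minimization over representatives within a fixed coset, the latter being exactly what Proposition~\eqref{Prop1} solves. First I would observe that every lattice point $\bm{z}\in\Lambda_q(C)$ lies in a unique coset $\overline{\bm{z}}\in\Lambda_q(C)/q\mathbb{Z}^n\simeq C$, so that
\[
\min_{\bm{z}\in\Lambda_q(C)} d(\bm{r},\bm{z}) \;=\; \min_{\overline{\bm{x}}\in C}\ \min_{\substack{\bm{z}\in\Lambda_q(C)\\ \overline{\bm{z}}=\overline{\bm{x}}}} d(\bm{r},\bm{z}).
\]
By Proposition~\eqref{Prop1}, for a fixed coset $\overline{\bm{x}}$ the inner minimum is attained at $z_i = x_i + q\lceil(r_i-x_i)/q\rfloor$ and equals $\sum_{i=1}^n |r_i - x_i - q\lceil(r_i-x_i)/q\rfloor|$; I would note in passing that this value is independent of the chosen representative $\bm{x}$ of the coset.

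Second, I would identify this inner minimum with the quotient Lee distance. The quantity $|r_i - x_i - q\lceil(r_i-x_i)/q\rfloor|$ is precisely the distance from $r_i-x_i$ to the nearest multiple of $q$, which by the definition of nearest-integer rounding equals $\min\{(r_i-x_i)\,(\bmod\ q),\ (x_i-r_i)\,(\bmod\ q)\}$. Summing over $i$ gives $d_{Lee}(\overline{\bm{r}},\overline{\bm{x}})$, the Lee distance on $\mathbb{R}^n/q\mathbb{Z}^n$, so the displayed global minimum equals $\min_{\overline{\bm{x}}\in C} d_{Lee}(\overline{\bm{r}},\overline{\bm{x}})$.

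Finally, since $d_{Lee}$ is well defined on the quotient, the value $d_{Lee}(\overline{\bm{r}},\overline{\bm{x}})$ depends only on the classes; computing it with the representative $\bm{r}\,(\bmod\ q)\in[0,q)^n$ of $\overline{\bm{r}}$ and the codeword $\bm{x}$ of $\overline{\bm{x}}$ shows that the minimizing coset is exactly the element $\overline{\bm{x}}\in C$ closest to $\bm{r}\,(\bmod\ q)$ in the Lee metric. Applying Proposition~\eqref{Prop1} to lift this coset back to $\Lambda_q(C)$ then yields a globally nearest lattice point $\bm{z}$, which is the claim.

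The main obstacle I anticipate is the coordinate-wise identity in the second step: one must verify carefully that the representative selected by the rounding $w_i=\lceil(r_i-x_i)/q\rfloor$ realizes the same minimizer as the ``smaller of the two directed distances'' appearing in the definition of $d_{Lee}$, including the boundary case $(r_i-x_i)/q\equiv 1/2\pmod 1$, where the nearest integer (hence the optimal representative) is not unique but both choices give distance $q/2$. Everything else follows cleanly from the separability of the $l_1$/Lee distance across coordinates and from the fact that the coset choice and the per-coordinate shifts $w_i$ can be optimized independently.
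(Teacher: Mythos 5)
Your proposal is correct and follows essentially the same route as the paper: decompose the global minimization into an outer search over cosets and an inner search over representatives handled by Proposition~\eqref{Prop1}, then identify the inner minimum with the quotient Lee distance so that the optimal coset is the codeword nearest to ${\bm r}\ (\mbox{mod } q)$. The only difference is cosmetic: where the paper verifies the per-coordinate identity by a three-way case analysis on $\alpha_i\in\{-1,0,1\}$ after first reducing ${\bm r}$ modulo $q$, you observe directly that $|r_i-x_i-q\lceil (r_i-x_i)/q\rfloor|$ is the distance from $r_i-x_i$ to the nearest multiple of $q$, which is a slightly cleaner way to reach the same conclusion.
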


\begin{proof} Let ${\bm r}=(r_1,\ldots,r_n)^{t} =
({r_1}^{*},\ldots,{r_n}^{*})^{t} + q (t_1,\ldots,t_n)^{t},$ with $0
\leq {r_i}^{*} \leq q,$ $t_i \in \mathbb{Z}$, for $i=1,\ldots,n$,
that is ${\bm r} \,\, (\mbox{mod }q)  = ({r_1}^{*},\ldots,{r_n}^{*})^{t}.$
Let ${\bm {\overline{x}}} \in C,$ ${\bm x}=
({x_1},\ldots,{x_n})^{t}$, $0 \leq x_i \leq q-1$ for $i=1,\ldots,n,$
be a closest point to ${\bm r}\,\, (\mbox{mod }q)$ considering the Lee
metric. We will show that a closest point to ${\bm r}$ in $\Lambda$
is in the same class that ${\bm x}$ in
$\Lambda_{q}(C)/q\mathbb{Z}^{n}.$ For each class ${\overline{\bm a}}
\in C,$ ${\bm a} = (a_1,\ldots,a_n)^{t},$  by Proposition
\eqref{Prop1} we find the representative ${\bm a^{*}}$ closest to
${\bm r}$ considering the Lee  metric. We will show that $d({\bm
r},{\bm a^{*}}) = d({\bm r}\,\, (\mbox{mod }q), \,\, {\overline{\bm a}}).$ For
the Lee distance we have
$$d({\bm r},{\bm a^{*}}) =  \sum_{i=1}^{n} |{r_i}^{*} - a_i - \alpha_i q |,$$ where
$\alpha_i = \left( \left\lceil\frac{{r_i}^{*} - a_i}{q} + t_i
\right\rfloor - t_i \right) .$ Since $-1 \leq \frac{{r_i}^{*} -
a_i}{q} \leq 1$ because $|{r_i}^{*}-a_i| \leq q,$ then $\alpha_i \in
\{-1,0,1\}.$  Now, we can observe that if $\alpha_i = 0$ for some
$i,$ then $-q/2 \leq {r_i}^{*} - a_i \leq q/2$ and this implies
$\min\{|r_i^{*}-a_i|,q-|r_i^{*}-a_i|\} = |r_i^{*}-a_i|.$ If $\alpha_i
= 1$ for some $i,$ then $q/2 < {r_i}^{*} - a_i \leq q $ and then
$\min\{|r_i^{*}-a_i|,q-|r_i^{*}-a_i|\} = q -|r_i^{*}-a_i| \mbox { and
} |r_i^{*}-a_i| = r_i^{*}-a_i.$ Finally, if $\alpha_i = -1$ for some
$i$, then $-q \leq {r_i}^{*} - a_i < -q/2$ and then
$\min\{|r_i^{*}-a_i|,q-|r_i^{*}-a_i|\} = q - |r_i^{*}-a_i| \mbox{ and
} |r_i^{*}-a_i| = -(r_i^{*}-a_i).$ So,  $d({\bm r},{\bm a^{*}}) =
\sum_{i=1}^{n} |{r_{i}}^{*} - a_i  - \alpha_i q| = \sum_{i=1}^{n}
\min\{|{r_i}^{*} - a_i|, q-|{r_i}^{*}-a_i|\} = d({\bm
r}\,\, (\mbox{mod }q), \,\, {\overline{\bm a}}).$ Since ${\overline{\bm x}}$
satisfies $d({\bm r}\,\, (\mbox{mod }q), \,\,  {\overline{\bm x}}) = \min
\{d({\bm r}\,\, (\mbox{mod }q), \,\,  {\overline{\bm a}}), {\overline{\bm a}}
\in C\}$ then ${\bm z}$ satisfies $d({\bm r},{\bm z})= \min\{d({\bm
r},{\bm{y}}),{\bm y} \in \Lambda_{q}(C)\}.$ \end{proof}

\begin{example} Consider the cyclic $13$-ary code in $\mathbb{Z}_{13}^{2},$
$C = \left< (\overline{1},\overline{5})^{t}\right>.$ It has minimum
Lee distance  $d(C)=5$ and error correction capacity $t=2$. For the
received vector ${\bm r} =(0,-6)^{t}$ the Lee-closest codeword to
${\bm r}\,\, (\mbox{mod } q)=(\overline{0},\overline{7})^{t}$ is
${\bm x}=(\overline{12},\overline{8})^{t}.$ Hence in Proposition
\eqref{Prop1}, $w_1 = w_2 = -1$ and by Proposition \eqref{Prop2} the
closest lattice point to ${\bm r}$ is ${\bm z} = (-1,-5)^{t}.$
Figure (1) shows the lattice $\Lambda_{13}(C)$ and its Voronoi
regions.
\end{example}

\begin{figure}[!t]
\centering
\includegraphics[scale=0.7]{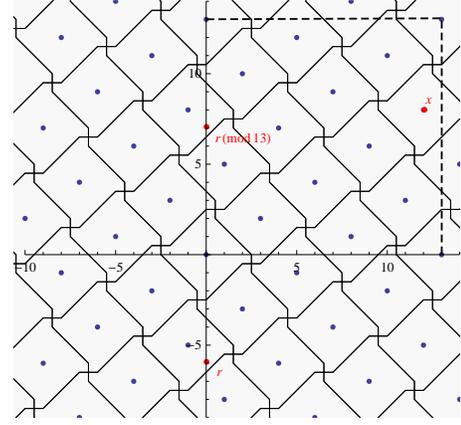}
\caption{Decoding  $\bm{r}=(0,-6) \in \mathbb{R}^2$}
\end{figure}

\begin{example} Consider ${\bm C}$ the BCH code defined in the ring $\frac{\mathbb{Z}_4[x]}{<f(x)>}, $ where $f(x) =
x^3+x+1$ with parity-check matrix
\[\left(\begin{array}{ccccccc} 1 & 1 & 1 & 1 & 1 & 1 & 1 \\ \alpha^5 & \alpha &  1 & \alpha^3 &
 \alpha^6 & \alpha^2 & \alpha^4 \end{array}\right)^{t},\] where $\alpha=\beta^{2}$ and $\beta$ is a root of $f(x)$
 \cite{Andrade}. We can derive a generator matrix for the
 $4$-ary lattice $\Lambda=\phi^{-1}({\bm C})$ as
\[\left(\begin{array}{ccccccc} 1 & 0 & 0 & 0 & 0 & 0& 0\\ 0 & 1 &  0 & 0 &
 0 & 0 & 0 \\  0 & 0 & 1 & 0 & 0 & 0 & 0 \\ 2 & 1 & 3 & 4 & 0 & 0 & 0 \\ 1 & 3 & 2 & 0 & 4 & 0 & 0 \\
 1 & 1 & 1 & 0 & 0 & 4 & 0 \\
 3 & 2 & 1 & 0 & 0 & 0 & 4 \end{array}\right).\] Let ${\bm r} = (0,7,4,8,0,12,0)^{t}$  be a received vector.
 The closest code point to  ${\bm r}\,\, (\mbox{mod }q) $ $=
 (\overline{0},\overline{3},\overline{0},\overline{0},\overline{0},\overline{0},\overline{0})^{t}$ is
 $\overline{\bm x} = (\overline{0},\overline{0},\overline{0},\overline{0},\overline{0},\overline{0},\overline{0})^{t}.$
 Hence in Proposition \eqref{Prop1},  $w_1 =0$, $w_2=\left[\frac{7}{4}\right] = 2,$
 $w_3 =\left[\frac{4}{4}\right]=1$, $w_4 =
 \left[\frac{8}{4}\right]=2$, $w_5=0$,
 $w_6=\left[\frac{12}{4}\right]=3$ and $w_7=0.$ Then, ${\bm z}= (0,0,0,0,0,0,0)^{t} + 4(0,2,1,2,0,3,0)^{t} =
 (0,8,4,8,0,12,0)^{t}$ is the closest point to ${\bm r}.$
 \label{exemplo:BCH}
 \end{example}

Proposition (\ref{Prop2}) provides a decoding process for $q$-ary lattices with the Lee metric via its generator code.
This can be specially interesting for associated codes with an efficient Lee decoding algorithm.  Decoding algorithms
for some $q$-ary codes in the Lee metric assuming integer coordinates for the received point ${\bm r}$  can be found
in \cite{Bader,Peter,Roth}. The algorithm derived in the next section is based only on the lattice structure and allows
real coordinates for ${\bm r}$.

\section{Lee Sphere Decoding}

The algorithm proposed here is analogous to the classic sphere
decoding in the Euclidean metric and follows the same basic ideas.
Nevertheless, we show that the structure of $q$-ary lattices in the Lee metric yields some important simplifications in
comparison to the traditional algorithm. From now on, $\left\| .
\right\|$ will always stand for the Lee norm.

Let $\Lambda_q (C)$ be a $q$-ary lattice with generator matrix $M$ in the special form given in
\eqref{eq:Geradora} and $\bm{r}$ a received point. We remark that $\Lambda_q(C)$ always have a generator matrix in that form for $q$ prime and in some cases for $q$ not prime, as can be seen in Example \eqref{exemplo:BCH}. Given $R > 0$ we
want to enumerate all $\bm{y} \in \Lambda_q (C)$ such that
$\left\|\bm{y} - \bm{r} \right\| = \left\| M\bm{x} - \bm{r} \right\|
\leq R$ and then find the closest lattice
point to $\bm{r}$.

Fixing the vector $\bm{x^1} = (x_1,\ldots,x_k)^t$, the minimum of
$\left\|M\bm{x} - \bm{r} \right\|$ is obtained by simply taking
\begin{equation} \label{minimizadores} x_j = \left\lceil (r_j - (B\bm{x^1})_{j-k})/q \right\rfloor \,\, (j = k+1,\ldots, n)\mbox{,}
\end{equation}
what can be seen as a consequence of proposition \eqref{Prop1}. Hence, in order to decode the received vector
$\bm{r}$, it is not necessary to enumerate all lattice points inside the Lee sphere above-cited, which allows us to
discard many points during the enumeration step by choosing the exact path of the sphere decoding tree \cite{Babak}
that leads to the minimum norm value, given the first $k$ nodes (Figure \eqref{fig:tree}).

The lattice points tested by the algorithm are those whose
coordinate vector $\bm{x}$ satisfies $l_j \leq x_j \leq u_j\mbox{, }
(j = 1, \ldots, k)$, where
\begin{equation}
\begin{split}
l_j &= \left\lceil R + r_j - \sum_{i=1}^{j-1} \left| r_i - x_i
\right| \right\rceil \mbox{ and } \\ u_j &= \left\lfloor -R + r_j +
\sum_{i=1}^{j-1} \left| r_i - x_i \right| \right\rfloor
\label{eq:lims}
\end{split}
\end{equation}
\noindent i.e., the points satisfying $\left\|\bm{x^1} -
(r_1,\ldots,r_k) \right\| \leq R$. In this case, the number of
feasible points corresponds to the number of $\mathbb{Z}^k$ points
inside a Lee sphere of radius $R$ centered at $\bm{r}$, which we
estimate by the volume of the sphere, i.e., $R^k 2^k/k!$. There is a
subtle difference between feasible points and nodes visited by the
Lee sphere decoding algorithm which will become clear later. If we
continue the search until depth $n$ we will get an estimated number
of feasible points as $R^n 2^n/n!$, which, for $n$ much larger than
$k$ represents a drastic reduction. We can now describe the
algorithm of the search done in a node at depth $j \leq k$ as the
enumeration of all $\mathbb{Z}^k$ in the Lee sphere centered at
$(e_1,\ldots,e_k)$ with radius $R$. For $j = k+1$ we choose $x_j$
according to Equation \eqref{minimizadores} and check if
$\left\|M\bm{x} - {\bm r} \right\| \leq R$. In order to speed up the
search some backtracking strategies for updating the decoding radius
are also possible, but we will not consider those in our discussion
on the complexity of the algorithm.

\begin{rmk} We remind that in the classic sphere decoding there are no restrictions on the generator matrix $M$ in order to perform enumerations, since it is possible to triangularize $M$, for example, via QR factorization where $Q$ is an orthogonal matrix. Unfortunately this approach cannot be employed here since rotations are not isometries in the Lee metric. Thus, the ``systematic'' form  \eqref{eq:Geradora} of the generator matrix for $\Lambda_q(C)$ is crucial in the process above-described.
\end{rmk}



\subsection{Choosing the decoding radius}

The radius choice is a critical part of sphere decoding. For the
Euclidean case, Viterbo and Biglieri \cite{Viterbo1} first proposed
the covering radius of a lattice, which can be estimated by Roger's
bound. Hassib and Vikalo \cite{Babak} suggested that the radius
could be chosen accordingly to the signal-to-noise ratio (SNR) of
the channel. Another possible strategy is the so-called Babai's
estimate which can be easily adapted to the Lee norm. The stategy is to take:
\begin{equation}
\hat{R} = \left\| M \left\lfloor \bm{x} \right\rceil - \bm{r}
\right\| \label{eq:BabaiEstimate}
\end{equation}
\noindent where $\bm{x}$ is the (real) solution of $M\bm{x} =
\bm{r}$ and $\left\lfloor \bm{x} \right\rceil$ is the vector whose
coordinates are $\bm{x}$'s entries rounded off to the closest
integer. This estimate guarantees at least one lattice point inside
the Lee sphere of radius $\hat{R}$ and allows us to take advantage
of the interesting structure of $q$-ary lattices. For this estimate and matrix M as in \eqref{eq:Geradora}, we have:

\begin{figure}[!t]
\centering
\includegraphics{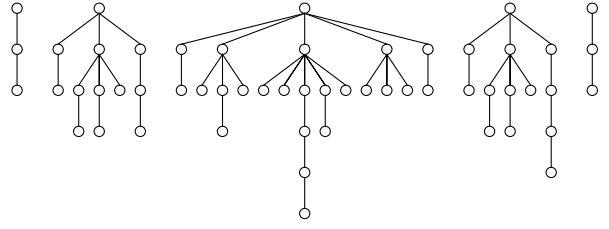}
\caption{Lee sphere decoding tree for $\Lambda_{q}(C)$ where C is the BCH code in Example \eqref{exemplo:BCH}, ${\bm r}=(1,1,1,5,2,3,5)^{t}$, and $R = 2$.}
\label{fig:tree}
\end{figure}

\begin{equation}
\hat{R} \leq \frac{k}{2} + \frac{q(n-k)}{2}.
\end{equation}

Therefore, if $\hat{n}_j$ is the number of visited nodes at
depth-$j$ (corresponding to the number of $\mathbb{Z}^k$ points
inside the ball $\left\|({x_1-r_1,\ldots,x_j-r_j})\right\| \leq
\hat{R}$), we have the following equation as an upper bound for the
expected number of nodes visited by the algorithm until depth $k$:

\begin{equation}
E[\mbox{\# of nodes}] = \sum_{j=0}^k \hat{n}_j \leq\sum_{j=0}^k \frac{(j+q(n-j))^j}{j!}.
\end{equation}

In fact, reasoning in the same way as \cite{Babak}, we can argue that $\hat{R}
\approx k^{1+1/k}/2e$ for large $k$ and hence the expected complexity of the algorithm is exponential,
which is inherent to problem itself.

In the special case that the received vector is in $\mathbb{Z}^n$ (or at least the first $k$ coordinates of
$\bm{r}$ are integers) we have the following:

\begin{prop}
Suppose the received vector $\bm{r}$ is such that $(r_1,\ldots,r_k)
\in \mathbb{Z}^k$. Then the number of nodes of the Lee sphere
decoding tree until depth $k$ is \textit{exactly}

\begin{equation}
\sum_{j = 0}^k \sum_{i=0}^{\min\{j,R\}} 2^i {j \choose i} {R \choose i}.
\end{equation}
\end{prop}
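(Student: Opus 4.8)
The plan is to reduce the node count to a lattice-point count in an $\ell_1$-ball and then evaluate that count by a clean combinatorial stratification. First I would observe that, because $(r_1,\ldots,r_k)\in\mathbb{Z}^k$, a node at depth $j\le k$ is a partial integer vector $(x_1,\ldots,x_j)\in\mathbb{Z}^j$ surviving the bounds \eqref{eq:lims}, which amount exactly to the nested constraint $\sum_{i=1}^{j}|x_i-r_i|\le R$. Since the left-hand side is an integer, this is equivalent to $\le\lfloor R\rfloor$, so we may take $R\in\mathbb{Z}_{>0}$ as the binomial $\binom{R}{i}$ in the statement already presupposes. The key point making the count \emph{exact} rather than an estimate is that this constraint is monotone in $j$: if $(x_1,\ldots,x_j)$ is feasible then so is its truncation $(x_1,\ldots,x_{j-1})$, so every feasible partial vector labels precisely one tree node and no feasible node is pruned before being created. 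Translating by the integer vector $(r_1,\ldots,r_j)$, the number $N_j$ of depth-$j$ nodes equals the number of integer points of the origin-centered Lee ball $\{\bm a\in\mathbb{Z}^j:\sum_{i=1}^j|a_i|\le R\}$; this is where integrality of the first $k$ coordinates of $\bm r$ is essential, since otherwise the rounded bounds would not yield an exact lattice-point count.

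Second, I would evaluate $N_j$ by stratifying the points $\bm a$ according to the number $i$ of nonzero coordinates. Choosing the support costs $\binom{j}{i}$, assigning a sign to each nonzero entry costs $2^i$, and it remains to count the positive magnitudes $b_1,\ldots,b_i\ge 1$ with $b_1+\cdots+b_i\le R$. Introducing a slack variable $s\ge 0$ and writing $b_\ell=d_\ell+1$ with $d_\ell\ge 0$ turns this into counting nonnegative solutions of $d_1+\cdots+d_i+s=R-i$, which by stars-and-bars is $\binom{R}{i}$. Since a nonzero entry of magnitude at least one can appear only if $i\le R$ (and of course $i\le j$), summing over $i$ gives $N_j=\sum_{i=0}^{\min\{j,R\}}2^i\binom{j}{i}\binom{R}{i}$.

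Finally, the total number of nodes of the tree down to depth $k$ is $\sum_{j=0}^{k}N_j$, which is the asserted expression; I would sanity-check the small cases $N_0=1$, $N_1=2R+1$, and $N_2=2R^2+2R+1$ against the known cardinalities of $\ell_1$-balls. I expect the main obstacle to lie in the first paragraph rather than in the arithmetic: one must argue carefully that the ``visited nodes'' of the decoder are in bijection with feasible partial integer vectors---precisely the subtle distinction between feasible points and visited nodes flagged earlier---and this bijection hinges both on the monotonicity of the Lee constraint and on the integrality hypothesis on $(r_1,\ldots,r_k)$. The magnitude count $\binom{R}{i}$ is the only genuinely combinatorial step and is routine once phrased via slack variables.
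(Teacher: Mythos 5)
Your proof is correct and follows essentially the same route as the paper: reduce the node count at depth $j$ to the number of $\mathbb{Z}^j$ points in a Lee ball of radius $R$ (using the integrality of $(r_1,\ldots,r_k)$ and the monotonicity of the nested constraints) and then sum over depths. The only difference is that the paper simply cites Golomb--Welch for the lattice-point count $\sum_{i=0}^{\min\{j,R\}} 2^i \binom{j}{i}\binom{R}{i}$, whereas you rederive it by stratifying over supports, signs, and magnitudes via stars and bars, which is a correct and self-contained substitute for the citation.
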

\begin{proof} The proof comes from previous arguments of this section and the fact that the number of points of
$\mathbb{Z}^j$ inside a Lee sphere of radius $R$ is \cite{Golomb}:

\begin{equation}
\sum_{i=0}^{\min\{j,R\}} 2^i {j \choose i} {R \choose i}.
\end{equation}

\end{proof}

\subsection{Comparisons}

There are several efficient algorithms to solve (exactly or
approximately) the Euclidean version of CVP. A first approach to
approximately solve the Lee sphere decoding problem could be through
the so-called Nearest Plane Algorithm \cite{Mic2} which essentially
projects the target vector on a LLL reduced basis for the lattice.
This approach yields a polynomial time algorithm with an exponential
approximate factor. If used to approximate CVP in the Lee metric,
the Nearest Plane Algorithm outputs a vector which satisfies:

\begin{equation} \left\| y - r \right\| \leq \frac{2}{\sqrt{n}} {\left(\frac{2}{\sqrt{3}}\right)}^n \left\|
\bar{y} - r \right\| \end{equation}

\noindent where $\bar{y} \in \Lambda$ is the closest point to $r$ in
the Lee norm and the $1/ \sqrt{n}$ factor is, of course, explained
by the equivalence relation between the $l_1$ and $l_2$ norms.

Concerning the comparison with the classic sphere decoding, we will
not go into detail on the number of arithmetic operations performed
by the algorithms, and let this more careful analysis for a further
work. However, since the performance of the algorithm is closely
related to the volume of the spheres involved in the process, it is
worth to study whether the Lee sphere has a smaller volume than the
Euclidean one, given a received point and its Babai estimate (in
both norms). In what follows we show that when the dimension ($k$)
increases, the Euclidean spheres have greater volume than the Lee
spheres, in average.

Stating the problem more formally, let $\bm{r} = M\bm{x} + \bm{e}$
be a received point and $\hat{R}_1$ and $\hat{R}_2$ the Babai's
estimate to the decoding radius in Lee and Euclidean norm,
respectively. Clearly $\hat{R}_1 \geq \hat{R}_2$. We want to know
whether $\mbox{Vol}(B_{Lee}(\hat{R}_1)) \geq \mbox{Vol}(B_{Euclid}(\hat{R}_2))$ in
average or not, where $\mbox{Vol(S)}$ stands for the Euclidean volume of a set $S$. Without lost of generality we assume that the
transmitted point is the origin. If we fix the value $\hat{R}_2$ and
take the average volume of all Lee spheres centered at the origin
and containing a point of the surface of the Euclidean sphere of
radius $\hat{R}_2$, we have:

\begin{equation}
\begin{split}
&\overline{\mbox{Vol}}(B_{Lee}) = \frac{\int \ldots \int_S V_{Lee}({\phi}_1, \ldots, {\phi}_{n-1}) d{\phi}_1 \ldots {\phi}_{n-1}}{(\pi/2)^{n-1}} \\ &= \frac{\hat{R}_2^n 2^n}{n! (\pi/2)^{n-1}} \int \ldots \int_S (x_1+\ldots+x_n)^n d{\phi}_1 \ldots {\phi}_{n-1} \\
\end{split}
\end{equation}

\noindent where $(x_1,\ldots,x_n)$ is in the surface $S$ of the
Euclidean sphere and the angles $\phi_1, \ldots, \phi_n$ are the
hyperspherical coordinates. If we define

\begin{equation}
I(n,j) := \int \ldots \int_S (x_1+\ldots+x_n)^j d{\phi}_1 \ldots
{\phi}_{n-1},
\end{equation}

\noindent the following expressions can be derived:

\begin{equation}
I(n,n) = \sum_{j=0}^n {n \choose j} \frac{\Gamma(\frac{j+1}{2}) \Gamma(\frac{n-j+1}{2})}{2\Gamma(\frac{n}{2}+1)} I(n-1,j) \mbox{ and}
\end{equation}

\begin{equation}
\label{eq:LeeAverage} \overline{\mbox{Vol}}(B_{Lee}) =
\frac{\hat{R}_2^n 2^n}{n! (\pi/2)^{n-1}}  I(n,n).
\end{equation}

\noindent We can then show that

\begin{equation}
\label{eq:LeeEuclid}
\lim_{n-> \infty} \frac{I(n,n) 2^n/(n! (\pi/2)^{n-1})}{\pi^{n/2}/\Gamma(n/2+1)}
 = 0,
\end{equation}

\noindent what means there is a value $n_o$ such that for all $n
\geq n_o$, we have $\overline{\mbox{Vol}}(B_{Lee}) <
\mbox{Vol}(B_{Euclid}(\hat{R}_2))$. We illustrate this fact in
Figure \ref{fig:teoLee}.

\begin{figure}[!h]

\includegraphics[scale=0.52]{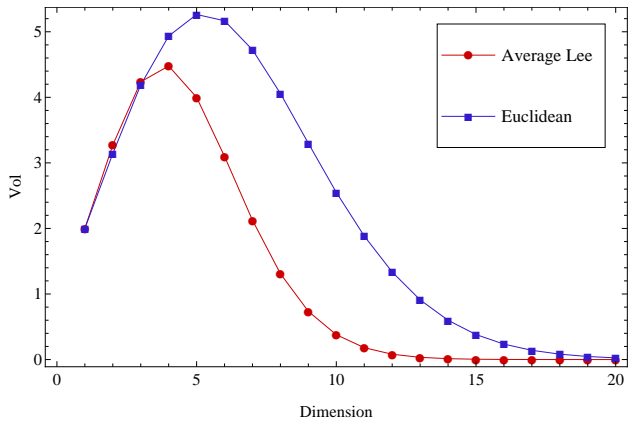}
\caption{Volume of the unitary Euclidean sphere and average volume of Lee sphere
(Equation \eqref{eq:LeeAverage} for $\hat{R}_2 = 1$) versus dimension.}
\label{fig:teoLee}
\end{figure}

\begin{rmk} It is a well-known fact that the ratio between the volume of a  sphere in the $l_1$
norm and a Euclidean sphere of the same radius vanishes while increasing the dimension. This fact, however, does not imply Equation \eqref{eq:LeeEuclid}, since the spheres considered here have different radius.
\end{rmk}

\subsection{Simulations}

To simulate what was proposed in the previous sections we considerer
received vectors of the form:

\begin{equation}
\label{eq:Recebido}
\bm{r} = M\bm{x} + \bm{e}
\end{equation}

\noindent where $M$ is in the form \eqref{eq:Geradora}, the entries
of its submatrix $B$ are uniform on $\mathbb{Z}_q^n$ and the entries
of $e$ are i.i.d. zero mean random variables with Laplace
distribution. Our choice of this noise instead of the usual Gaussian
noise is explained by the relation of Laplace distribution with the
$l_1$ norm. Indeed, while Gaussian noise samples are Euclidean
spherical distributed around the transmitted point, Laplacian noise
samples are Lee spherical distributed. Channels with Laplacian noise
have been investigated in some works (e.g. \cite{Verdu}) as a case
of general channels with additive noise. Figure \ref{fig:sim} shows
simulation in dimensions up to $17$.

\begin{figure}[!h]
\centering
\includegraphics[scale=0.6]{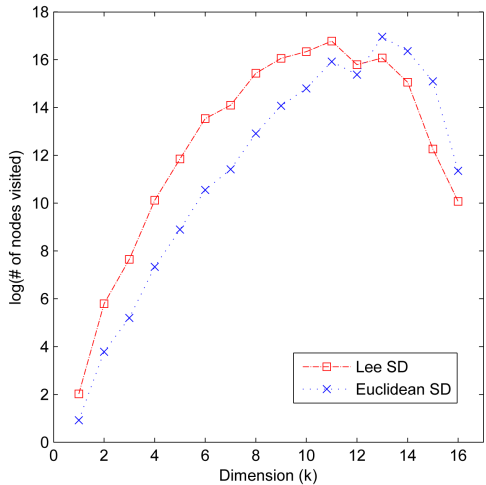}
\caption{Simulation results for fixed $n = 17$, $q = 5$ and $k$ from
$1$ to $16$.} \label{fig:sim}
\end{figure}


\section{Conclusion}

Connections between the decoding process on codes and lattices may
provide tools for error correcting codes and cryptographic schemes.
We discuss here this connection in the case of $q$-ary lattices,
which are obtained from $q$-ary linear block codes through
Construction A, considered with the Lee distance, and present a Lee
sphere decoding algorithm for lattices. Extensions of the presented
approach here to other constructions of lattices will be considered
in a future work.





%

\end{document}